\documentclass[11pt]{article}
\usepackage[margin=1in]{geometry}
\date{\vspace{-5ex}}
    \usepackage{cite}
	\usepackage{graphicx}
\usepackage{amsmath}
\usepackage{algorithm}
\usepackage{algpseudocode}

	\usepackage{amssymb}
	\usepackage{color}
	\usepackage[mathscr]{euscript}
	\usepackage{fixmath}
	\usepackage{url}
	\usepackage{tkz-graph}
	\usepackage{mdframed}
		 \usepackage{siunitx}
	 \usepackage{relsize}
	 \usepackage{subcaption}
	 \usepackage{bm}
	 \usepackage{siunitx}
	 \usepackage{wrapfig}
	 \usepackage{dsfont}
	 \usepackage{enumerate}
	 \usepackage{balance}
	 \usepackage{empheq}
	 \usepackage{mathtools}

\allowdisplaybreaks

	%
	\hyphenation{op-tical net-works semi-conduc-tor}
	\hyphenation{e-qui-va-lent}
	\hyphenation{to-po-logy}
	
\newtheorem{theorem}{Theorem}
\newtheorem{lemma}{Lemma}
\newtheorem{proposition}{Proposition}

\newtheorem{definition}{Definition} 
\newtheorem{proof}{Proof}

\setcounter{assumption}{0}

	\DeclareMathAlphabet\mathbfcal{OMS}{cmsy}{b}{n}

	\title{\LARGE \bf
	 Robust Distributed Control\\ Beyond Quadratic Invariance
}

\author{$^{~}$Luca Furieri and Maryam Kamgarpour
\thanks{This research was gratefully funded by the European Union ERC Starting Grant CONENE. $^{~}$The authors are with the Automatic Control Laboratory, D-ITET, ETH Zurich, Switzerland. e-mails: {\tt\footnotesize \{furieril, mkamgar\}@control.ee.ethz.ch}}%
}

	\begin{document}

	\maketitle

	\begin{abstract}
	The problem of robust distributed control arises in several large-scale systems, such as transportation networks and power grid systems. In many practical scenarios controllers might not have enough information to make globally optimal decisions in a tractable way. We propose a novel class of tractable optimization problems whose solution is a controller complying with any specified information structure. The approach we suggest is based on decomposing intractable information constraints into two subspace constraints in the disturbance feedback domain.  We discuss how to perform the decomposition in an optimized way. The resulting control policy is globally optimal when a condition known as Quadratic Invariance (QI) holds, whereas it is feasible and it provides a provable upper bound on the minimum cost when QI does not hold.   Finally, we show that our method can lead to improved performance guarantees with respect to previous approaches, by applying the developed techniques to the platooning of autonomous vehicles. 

	
	\end{abstract}	
	\section{Introduction}
	


 Safe and efficient operation of large-scale systems such as the electric power grid, digital communication networks, autonomous vehicles, and financial systems,  relies on coordinating the decision making of multiple interacting agents. In most practical scenarios these agents can only base their decisions on partial local information due to geographic distance, privacy concerns or the high cost of communication. Lacking full information can make the task of designing optimal decisions significantly more challenging.

 The celebrated work of \cite{Witsenhausen} highlighted  that optimal control policies for the  Linear Quadratic Gaussian  problem given partial information may be nonlinear.   The intractability inherent to  lack of full information was  investigated in \cite{blondel2000survey, papadimitriou1986intractable}. The core challenges discussed in these works motivated  identifying special cases of optimal control problems with partial information for which efficient algorithms can be used. 
 Several cases of tractable problems with partial output information were characterized in \cite{ho1972team,bamieh2005convex}  and were later  generalized in \cite{rotkowitz2006characterization}, where the authors established  necessary and sufficient conditions  referred to as \emph{quadratic invariance} (QI) to obtain an exact convex reformulation.   The framework of QI was derived in the context of infinite horizon optimization, whereas several modern control architectures are based on finite horizon optimization.  In most applications, safety constraints must be robustly satisfied despite disturbances.  Motivated by these requirements, the works \cite{CDC2017,furieri2017value} showed that the QI notion can be exploited to include robust satisfaction of state and input constraints by establishing a connection between Youla parametrizations and disturbance feedback control policies. 
%


Nevertheless, the conditions posed by QI  can be too stringent for practical purposes. For instance, when the dynamics of the system evolve according to a strongly connected topology, delayed information about every output is required for each control input \cite{furieri2017value}. Sharing  this information might be limited by bandwidth and other network restrictions. Furthermore, controllers may be unable to share information due to strict privacy requirements.




Currently, it is not known how one could obtain an exact convex representation of state/output feedback distributed control problems given an information structure which is \emph{not} QI. For this reason, it is important to develop methods to compute approximate solutions. A line of work has focused on computing possibly suboptimal static distributed control policies using numerical methods \cite{fattahi2016theoretical,lin2011augmented,arastoo2014optimal}. 
Other approaches instead, hinge on exploiting the Youla parametrization to determine possibly suboptimal dynamic control policies.  Restricting the information structure to restore QI and obtain upper bounds on the minimum cost of the original problem was considered in \cite{rotkowitz2012nearest}. The work in \cite{matni2013heuristic} introduced the concept of optimization over \emph{QI covers} in the presence of delay constraints. It was shown that the iterative procedure proposed in \cite{matni2013heuristic} yields globally optimal solutions in certain cases. However, the assumption is that controllers have access to delayed measurements of all the outputs. With a similar philosophy to \cite{rotkowitz2012nearest,matni2013heuristic}, we aim at characterizing tractable formulations of the generally intractable distributed control problem, which are valid for any system and  information structure. When QI does not hold, our feasible solution provides a provable upper bound on the minimum cost which is at least as low as the one obtained with the techniques in \cite{rotkowitz2012nearest}. In general, the upper bound we derive can be strictly lower than \cite{rotkowitz2012nearest} and thus yields a more performing  distributed control policy.

Our main contributions are as follows. First, we suggest a novel class of  subspace constraints on the disturbance feedback parameter which preserve the given information structure.  Second, we determine optimized choices for the subspace constraints, with the goal of lowering these upper bounds.  Last, we show  that the developed techniques can lead to improved performance guarantees with respect to the approach of \cite{rotkowitz2012nearest} through a platooning example, arising in autonomous vehicles.

Section \ref{se:setup} sets up the problem. Section \ref{se:parametrizing} contains our main results about upper bounds on the minimum cost of control policies complying with any  specified information structure. Section~\ref{se:upperbounds} draws a connection with QI and  the approach in \cite{rotkowitz2012nearest}. The application to the platooning of vehicles is studied  in Section~\ref{se:application}.

\emph{Notation:~~}
		Given a matrix $Y \in \mathbb{R}^{a \times b}$ we refer to its scalar element located at row $i$ and column $j$ as $Y(i,j)$. The symbol $I_a$ denotes the identity matrix of dimensions $a \times a$ while $0_{a \times b}$ denotes the zero matrix of dimensions $a \times b$, for every $a \in \mathbb{Z}_{[1,\infty)}$ and $b \in \mathbb{Z}_{[1,\infty)}$. Given a binary matrix $X \in \{0,1\}^{a \times b}$ we define the subspace $\text{Sparse}\left(X\right) \subseteq \mathbb{R}^{a \times b}$ as $\{Y \in \mathbb{R}^{a \times b}|~Y(i,j)=0,~ \forall i,j \text{ s.t. } X(i,j)=0\}$.
	Given $Y \in \mathbb{R}^{a \times b}$ we define $X=\text{Struct}(Y)$ as the binary matrix
	\begin{equation*}
	X(i,j)=\begin{cases}
	1\quad \text{if }Y(i,j)\neq 0\,,\\
	0 \quad \text{otherwise}\,.
	\end{cases}
	\end{equation*}
	 Let $X,X' \in \{0,1\}^{a \times b}$ be binary matrices. Throughout the paper we will adopt the following conventions. 	  $XX':=\text{Struct}(XX')$ and $X^r:=\text{Struct}(X^r)$. $X \leq X'$ if and only if $X(i,j)\leq X'(i,j)~\forall i,j$.  $X < X'$ if and only if $X \leq X'$ and there exist indices $i,j$ such that $X(i,j)<X'(i,j)$.  $X \not\leq X'$ if and only if there exist indices $i,j$ such that $X(i,j)>X'(i,j)$.

\section{Problem Formulation}
\label{se:setup}

	We consider a discrete time system
	\begin{equation}
	\label{eq:system}
	\begin{aligned}
	&x_{k+1}=Ax_k+Bu_k+Dw_k\,, \quad y_k=Cx_k+Hw_k\,,\\
	\end{aligned}
	\end{equation}
	where $k \in \mathbb{Z}_{[0,\infty)}$, $x_k \in \mathbb{R}^n$, $y_k \in \mathbb{R}^p$, $u_k \in \mathbb{R}^m$, $w_k\in \mathcal{W}$ and $\mathcal{W} \subseteq \mathbb{R}^n$ is the set of possible disturbances.  The system starts from a known initial condition $x_0\in \mathbb{R}^n$. Let us define a prediction horizon of length $N$. Our goal is to minimize a cost function of the history of states and inputs $J(x_0,\cdots,x_N,u_0,\cdots,u_{N-1})$. Furthermore, the states and inputs need to satisfy
	\begin{equation}
	\label{eq:constraints_state_input}
	\begin{aligned}
	&\begin{bmatrix}x_k\\u_k\end{bmatrix} \in \Gamma\subseteq \mathbb{R}^{n+m}\,,\quad x_N \in \mathcal{X}_f \subseteq \mathbb{R}^{n}\,,
	\end{aligned}
	\end{equation}
	 for all $k \in \mathbb{Z}_{[0,N-1]}$ and for all realizations of disturbances taken from set $\mathcal{W}$. Each control input can depend on the history of a subset of outputs. This subset is described by the so-called 	\emph{information structure}.  An information structure can be time varying, in the sense that controllers may measure, memorize or forget specific outputs at different times.
	



	The search in the class of all output feedback policies is  intractable.   Hence, a possible approach is to restrict the search to the class of controllers that are affine in the history of the outputs. The time varying output feedback affine  control policy is expressed as
		\begin{equation}
	\label{eq:affine_feedback}
	\begin{aligned}
	&u_k=\sum_{j=0}^kL_{k,j}y_j+g_k\,,
	\end{aligned}
	\end{equation}
	for all time instants $k \in \mathbb{Z}_{[0,N-1]}$.
	
	 For every  $j \in \mathbb{Z}_{[0,k]}$, we consider binary matrices $S_{k,j}\in \{0,1\}^{m \times p}$ encoding what controllers at time $k$ know about the outputs at time $j\leq k$, that is $S_{k,j}(a,b)=1$ if and only if controller $a$ at time $k$ knows the $b$-th output  at time $j$. Let $\mathcal{S}_{k,j} \subseteq \mathbb{R}^{m \times p}$  denote the \emph{sparsity subspace} generated by the binary matrix $S_{k,j}$ in the sense that $\mathcal{S}_{k,j}=\text{Sparse}(S_{k,j})$.   The information structure on the input can thus be equivalently formulated as 
	\begin{equation}
	\label{eq:affine_feedback_constraints}
	\begin{aligned}
	&L_{k,j} \in \mathcal{S}_{k,j},~ g_k \in \mathbb{R}^{m}\,,\\
	\end{aligned}\
	\end{equation}
	for every $k \in \mathbb{Z}_{[0,N-1]}$ and  $j \in \mathbb{Z}_{[0,k]}$.  
We are now ready to state the optimization problem under study.
	\begin{empheq}[box=\fbox]{alignat*=3}
	&\text{\textbf{Problem 1}}&&~\\
	 &\min&&J(x_0,\cdots,x_N,u_0,\cdots,u_{N-1})\\ 
	 &~\text{s.t. }&&(\ref{eq:system}),(\ref{eq:constraints_state_input}) \quad \forall w_k\in \mathcal{W}\,,\\
	&~&& (\ref{eq:affine_feedback}),(\ref{eq:affine_feedback_constraints}) \quad  \forall k \in \mathbb{Z}_{[0,N-1]}\,, \forall j \in \mathbb{Z}_{[0,k]}.
		 	\end{empheq}
	In the above, the decision variables are the matrices $L_{k,j}$ and the vectors $g_k$  as in (\ref{eq:affine_feedback_constraints}) for all $k \in \mathbb{Z}_{[0,N-1]}$ and $j \in \mathbb{Z}_{[0,k]}$. For computational tractability we assume that $J(\cdot)$ is a convex function of the disturbance-free state and input trajectories (that is, when the disturbances are assumed not to be present)  and that the sets $\Gamma,~\mathcal{X}_f$ are polytopes  $\Gamma=\{(x,u) \in (\mathbb{R}^{n}, \mathbb{R}^m)\text{ s.t. }Ux+Vu\leq b\}$, where $U \in \mathbb{R}^{s\times n}$, $V \in \mathbb{R}^{s \times m}$ and $b \in \mathbb{R}^s$, and  $\mathcal{X}_f=\{x \in \mathbb{R}^{n}\text{ s.t. }Rx \leq z\}$,
where $R \in \mathbb{R}^{r \times n}$ and	$z \in \mathbb{R}^r$.   It is convenient to define the vectors of stacked variables as 
	\begin{equation*}
	\begin{aligned}
	&\mathbf{x}=\begin{bmatrix}x_0^\mathsf{T}&\cdots&x_N^\mathsf{T}\end{bmatrix}^{\mathsf{T}} \in \mathbb{R}^{n(N+1)}\,,\\
	&\mathbf{y}=\begin{bmatrix}y_0^\mathsf{T}&\cdots&y_{N}^\mathsf{T}\end{bmatrix}^\mathsf{T} \in \mathbb{R}^{p(N+1)}\,,\\
	 &\mathbf{u}=\begin{bmatrix}u_0^\mathsf{T}&\cdots&u_{N-1}^\mathsf{T}&0_{m \times 1}^\mathsf{T}\end{bmatrix}^\mathsf{T}\in \mathbb{R}^{m(N+1)}\,,\\
	&\mathbf{w}=\begin{bmatrix}w_0^\mathsf{T}&\cdots&w_{N-1}^{\mathsf{T}}&0_{n \times 1}^\mathsf{T}\end{bmatrix}^{\mathsf{T}}\in \mathbb{R}^{n(N+1)}\,.\\
	\end{aligned}
	\end{equation*}
	Equation (\ref{eq:system}) can be succinctly expressed as
	\begin{equation}
	\label{eq:states_stacked}
	\begin{aligned}
	&\mathbf{x}=\mathbf{A}x_0+\mathbf{Bu}+\mathbf{E}_D\mathbf{w}\,, \quad \mathbf{y}=\mathbf{Cx+Hw}\,,\\
	\end{aligned}\
	\end{equation}
where matrices $\mathbf{A}$, $\mathbf{B}$, $\mathbf{E}_D$, $\mathbf{C}$ and $\mathbf{H}$ are defined in the Appendix. Their derivation is straightforward from the recursive application of (\ref{eq:system}).
	Similarly,  considering (\ref{eq:affine_feedback}), the control input can be expressed as	$\mathbf{u}=\mathbf{Ly+g}$,	where $\mathbf{L}\in \mathbb{R}^{m(N+1)\times p(N+1)}$ and $\mathbf{g}\in \mathbb{R}^{m(N+1)}$ are defined in the  Appendix.  In order to satisfy (\ref{eq:affine_feedback_constraints}) matrix $\mathbf{L}$ must lie in the subspace $\mathbfcal{S} \subseteq \mathbb{R}^{m(N+1) \times p(N+1)}$, where $\mathbfcal{S}=\text{Sparse}(\mathbf{S})$ and $\mathbf{S}$ stacks the matrices $S_{k,j}$'s as per (\ref{eq:L}) in the Appendix.	

	 Problem~1 is non-convex regardless of the sparsity constraints being linear in $\mathbf{L}$. It is known that when an information structure is not given, parametrizing the controller as a disturbance feedback affine policy restores tractability of Problem~1 \cite{Goulart}.  Letting $\mathbf{P=C}\mathbf{E}_D+\mathbf{H}$ we define the disturbance feedback controller as $\mathbf{u}=\mathbf{QPw}+\mathbf{v}$.
 The decision variable  $\mathbf{Q}\in \mathbb{R}^{m(N+1)\times p(N+1)}$  is causal as in (\ref{eq:L}).
	It is possible to map a disturbance feedback controller $(\mathbf{Q,v})$ to the unique corresponding output feedback controller $(\mathbf{L,g})$ and vice versa as follows.
\begin{align}
&\mathbf{L}=\mathbf{Q}(\mathbf{CBQ}+I_{p(N+1)})^{-1}\,,\label{eq:map_Q_to_L}\\
&\mathbf{g}=\mathbf{v}-\mathbf{Q}(\mathbf{CBQ}+I_{p(N+1)})^{-1}(\mathbf{CBv+CA}x_0)\,,\nonumber\\
&\mathbf{Q}=\mathbf{L}(I_{p(N+1)}-\mathbf{CBL})^{-1}\,, \label{eq:map_L_to_Q}\\
&\mathbf{v}=\mathbf{L}(I_{p(N+1)}-\mathbf{CBL})^{-1}(\mathbf{CBg+CA}x_0)+\mathbf{g}\,.\nonumber
\end{align}
It is easy to show that when $J(\cdot)$ is convex in the disturbance-free states and inputs it is also convex in $(\mathbf{Q},\mathbf{v})$. The state and input constraints (\ref{eq:constraints_state_input}) are linear in ($\mathbf{Q,v}$) and  can be expressed as
\begin{equation*}
F\mathbf{v}+\max_{\mathbf{w}\in \mathcal{W}^{N +1}}(F\mathbf{QP}+G)\mathbf{w}\leq c\,,
\end{equation*}
 where $F\in \mathbb{R}^{(Ns+r) \times m(N+1)},~G \in \mathbb{R}^{(Ns+r) \times n(N+1)},~c \in \mathbb{R}^{Ns+r}$ are reported in the Appendix for completeness. However,  the sparsity constraint $\mathbf{L}=\mathbf{Q}(\mathbf{CBQ}+I_{p(N+1)})^{-1} \in \mathbfcal{S}$ is nonlinear in $\mathbf{Q}$ in general. Thus, we consider the following convex program:
 \begin{empheq}[box=\fbox]{alignat*=3}
	&\textbf{Problem~2}\\
	 &\min_{\mathbf{Q,v}}&&J(x_0,\mathbf{Q},\mathbf{v})\\ 
	 &~\text{s.t. }&&\mathbf{Q}\text{ is causal}\,,\\
	 &~&&F\mathbf{v}+\max_{\mathbf{w}\in \mathcal{W}^{N +1}}(F\mathbf{QP}+G)\mathbf{w}\leq c\,,\\
	 &~&& \mathbf{Q} \in \mathbfcal{R}\,,
	 \end{empheq}
	 where $\mathbfcal{R}$ is a subspace that must be designed to preserve the sparsity of $\mathbf{L}$ through the mapping (\ref{eq:map_Q_to_L}). To simplify the notation we introduce the following definition.
\begin{definition}
\emph{Let $X \in \mathbb{R}^{a \times b}$ and $Y \in \mathbb{R}^{b\times a}$. The closed-loop function  $h:\mathbb{R}^{a\times b}\times \mathbb{R}^{b\times a}\longrightarrow ~\mathbb{R}^{a \times b}$ is defined as $h(X,Y)=-X(I_b-YX)^{-1}$. For a set $\mathcal{X} \subseteq \mathbb{R}^{a\times b}$ the set $h(\mathcal{X},Y)$ is defined as $h(\mathcal{X},Y)=\{h(X,Y),~X\in \mathcal{X}\}$.}
\end{definition}

Notice that the operator $h(\cdot)$ maps a disturbance feedback controller $\mathbf{Q}$ to a corresponding output feedback controller $\mathbf{L}$. In particular, the mappings (\ref{eq:map_Q_to_L}) and (\ref{eq:map_L_to_Q}) can be expressed as $\mathbf{L}=h(\mathbf{-Q},\mathbf{CB})$ and $\mathbf{Q}=-h(\mathbf{L},\mathbf{CB})$ respectively. Accordingly, to preserve the sparsity of $\mathbf{L}$ through mapping (\ref{eq:map_Q_to_L}) we require that $\mathbfcal{R}$ is designed such that
 \begin{equation}
\label{eq:generalized_set_requirement}
h(\mathbfcal{R},\mathbf{CB}) \subseteq \mathbfcal{S}\,,
\end{equation}
where we used the fact that $-\mathbfcal{R}=\mathbfcal{R}$ because $\mathbfcal{R}$ is a subspace. Whenever $\mathbfcal{R}$ satisfies (\ref{eq:generalized_set_requirement}) we equivalently say that it is \emph{sparsity preserving}. Refer to Figure~\ref{fig:mapping} to visualize (\ref{eq:generalized_set_requirement}). Notice that $h(\mathbfcal{R},\mathbf{CB})$ is a non-convex set in general despite $\mathbfcal{R}$ being convex, because $h(\cdot,\mathbf{CB})$ is a nonlinear map.

We remark that designing $\mathbfcal{R}$ is not a trivial task. For instance, if we simply require $\mathbf{Q} \in \mathbfcal{S}$ the resulting $\mathbf{L}$ might not lie in $\mathbfcal{S}$.  Therefore,  in the next section we search for subspaces $\mathbfcal{R}$ so as to formulate a tractable convex program (Problem~2) whose  feasible solutions correspond to feasible $\mathbf{L}$'s for the original problem (Problem~1). The optimal solution of Problem~2 will thus correspond to a provable upper bound on the minimum cost of Problem~1.
	\begin{figure}[thb]
	      \centering
	      	\includegraphics[width=0.5\textwidth]{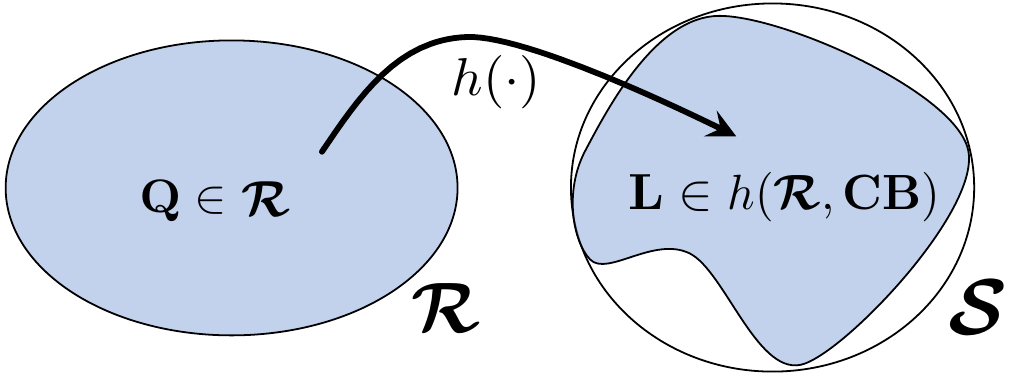}
	      		      	\setlength{\belowcaptionskip}{-17pt}

	      	\caption{ \small Sparsity preserving subspaces}
	      		\label{fig:mapping}
	\end{figure} 
\section{ Solution Approach}
\label{se:parametrizing}
 We  describe the key idea leading to the construction of the proposed family of sparsity preserving subspaces as in (\ref{eq:generalized_set_requirement}). Consider equation (\ref{eq:map_Q_to_L}) and notice that $\mathbf{L}$ can be equivalently written as $\mathbf{L}=(I_{m(N+1)}+\mathbf{QCB})^{-1}\mathbf{Q}$. By taking the power series expansion of the inverse matrix and exploiting the fact that  $\mathbf{QCB}$ is nilpotent because its diagonal is zero by construction, we obtain
\begin{equation}
\label{eq:power_expansion}
\mathbf{L}=\sum_{i=0}^{N-1}(\mathbf{QCB})^i\mathbf{Q}\,.
\end{equation}
Our approach is to ensure that every addend in (\ref{eq:power_expansion}) lies in $\mathbfcal{S}$ by means of subspace constraints on $\mathbf{Q}$. This can be done by decomposing the term $\mathbf{QCBQ}$, which is quadratic in $\mathbf{Q}$, into two  factors $\mathbf{QCB}$ and $\mathbf{Q}$, which are both linear in $\mathbf{Q}$. Both linear terms can be required to have certain sparsity patterns so that all the addends of (\ref{eq:power_expansion}) in the form $(\mathbf{QCB})^i\mathbf{Q}$ lie in $\mathbfcal{S}$ for all $i \in \mathbb{Z}_{[0,N-1]}$. This key idea leads to defining the notion of \emph{Generalized Sparsity Subspaces} (GSS) as follows.
\begin{definition}
\label{de:GSS}
\emph{For a given $G \in \mathbb{R}^{b \times a}$ and any matrices $T \in \{0,1\}^{a \times b}$, $Y\in \{0,1\}^{a \times a}$ the Generalized Sparsity Subspace (GSS)  $\mathbfcal{R}_G(T,Y) \subseteq \mathbb{R}^{a \times b}$ is defined as 
\begin{equation*}
\mathbfcal{R}_G(T,Y) =\{Q \in \text{Sparse}(T)~\text{ s.t. }~QG \in \text{Sparse}(Y)\}\,.
\end{equation*}}
\end{definition}

%
We say that these sparsity subspaces  are \emph{generalized} because the constraint that $\mathbf{Q}$ lies in $\mathbfcal{S}$, as commonly considered in the literature \cite{rotkowitz2006characterization,rotkowitz2012nearest,CDC2017},  can be obtained from a specific choice of $T,Y$ as above. In particular, by letting $T=\mathbf{S}$ and $Y\geq\mathbf{S\Delta}$, where $\mathbf{\Delta}=\text{Struct}(\mathbf{CB})$, we have that $\mathbfcal{R}_{\mathbf{CB}}(T,Y)=\mathbfcal{S}$. This is because if $\mathbf{Q} \in \mathbfcal{S}$ then $\mathbf{QCB} \in \text{Sparse}(\mathbf{S\Delta})$ by construction (see Lemma~\ref{le:structure_multiplication} ahead). 

Next, we derive our main result on the conditions ensuring that a GSS is sparsity preserving. 
\begin{theorem}
\label{th:GSPS}
Let $\mathbf{T,Y}$ be binary matrices such that $\mathbf{T}\leq \mathbf{S}$ and $\mathbf{YT}\leq \mathbf{T}$. Then, the subspace $\mathbfcal{R}_{\mathbf{CB}}(\mathbf{T,Y})$ is sparsity preserving as per (\ref{eq:generalized_set_requirement}).
\end{theorem}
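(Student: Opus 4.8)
The plan is to use the finite power-series expansion (\ref{eq:power_expansion}), which expresses the closed-loop map as $\mathbf{L}=h(-\mathbf{Q},\mathbf{CB})=\sum_{i=0}^{N-1}(\mathbf{QCB})^{i}\mathbf{Q}$, and to show that under the stated hypotheses \emph{each} individual summand $(\mathbf{QCB})^{i}\mathbf{Q}$ already lies in $\mathbfcal{S}$ whenever $\mathbf{Q}\in\mathbfcal{R}_{\mathbf{CB}}(\mathbf{T},\mathbf{Y})$. Since $\mathbfcal{S}$ is a subspace it is closed under this finite sum, hence $\mathbf{L}\in\mathbfcal{S}$; and because $\mathbfcal{R}_{\mathbf{CB}}(\mathbf{T},\mathbf{Y})$ is itself a subspace, this yields $h(\mathbfcal{R}_{\mathbf{CB}}(\mathbf{T},\mathbf{Y}),\mathbf{CB})\subseteq\mathbfcal{S}$, which is exactly the sparsity-preserving requirement (\ref{eq:generalized_set_requirement}).

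First I would fix $\mathbf{Q}\in\mathbfcal{R}_{\mathbf{CB}}(\mathbf{T},\mathbf{Y})$ and record the two defining facts $\mathbf{Q}\in\text{Sparse}(\mathbf{T})$ and $\mathbf{QCB}\in\text{Sparse}(\mathbf{Y})$. The $i=0$ summand is immediate: $\mathbf{Q}\in\text{Sparse}(\mathbf{T})\subseteq\text{Sparse}(\mathbf{S})=\mathbfcal{S}$ because $\mathbf{T}\leq\mathbf{S}$. For $i\geq 1$ I would invoke Lemma~\ref{le:structure_multiplication} on the structure of matrix products: from $\mathbf{QCB}\in\text{Sparse}(\mathbf{Y})$ we obtain $(\mathbf{QCB})^{i}\in\text{Sparse}(\mathbf{Y}^{i})$, and one further multiplication by $\mathbf{Q}\in\text{Sparse}(\mathbf{T})$ gives $(\mathbf{QCB})^{i}\mathbf{Q}\in\text{Sparse}(\mathbf{Y}^{i}\mathbf{T})$, where $\mathbf{Y}^{i}\mathbf{T}$ is read in the paper's convention as $\text{Struct}(\mathbf{Y}^{i}\mathbf{T})$.

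It then remains to establish the combinatorial inequality $\mathbf{Y}^{i}\mathbf{T}\leq\mathbf{S}$ for every $i\in\mathbb{Z}_{[1,N-1]}$, which I would deduce from the chain $\mathbf{Y}^{i}\mathbf{T}\leq\mathbf{T}\leq\mathbf{S}$. The first inequality is proved by induction on $i$: the base case $i=1$ is precisely the hypothesis $\mathbf{YT}\leq\mathbf{T}$, and for the step, assuming $\mathbf{Y}^{i}\mathbf{T}\leq\mathbf{T}$, I would write $\mathbf{Y}^{i+1}\mathbf{T}=\text{Struct}\bigl(\mathbf{Y}\cdot(\mathbf{Y}^{i}\mathbf{T})\bigr)$ and use monotonicity of left-multiplication by the fixed binary matrix $\mathbf{Y}$ with respect to $\leq$ to conclude $\mathbf{Y}^{i+1}\mathbf{T}\leq\text{Struct}(\mathbf{Y}\mathbf{T})=\mathbf{YT}\leq\mathbf{T}$. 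Combining the pieces, $(\mathbf{QCB})^{i}\mathbf{Q}\in\text{Sparse}(\mathbf{Y}^{i}\mathbf{T})\subseteq\text{Sparse}(\mathbf{S})=\mathbfcal{S}$ for all $i$, and summing over $i$ completes the argument.

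The substance here is the decomposition idea already spelled out before the statement; the only real obstacle is bookkeeping with the $\text{Struct}$ conventions — namely that $\text{Struct}$ of a product of (nonnegative) binary matrices depends only on the structures of the factors, so that powers $\mathbf{Y}^{i}$ and composites $\mathbf{Y}^{i}\mathbf{T}$ are well defined, and that $\text{Struct}(\mathbf{Y}\cdot X)$ is nondecreasing in $X$ with respect to $\leq$. Both facts are elementary (everything happens in the nonnegative orthant once one passes to supports) and I expect them to follow from, or be subsumed by, Lemma~\ref{le:structure_multiplication}; no analytic estimates are involved.
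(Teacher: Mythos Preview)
Your proposal is correct and follows essentially the same route as the paper: both use the finite expansion (\ref{eq:power_expansion}), invoke Lemma~\ref{le:structure_multiplication} to get $(\mathbf{QCB})^{i}\mathbf{Q}\in\text{Sparse}(\mathbf{Y}^{i}\mathbf{T})$, and then reduce to the combinatorial fact $\mathbf{Y}^{i}\mathbf{T}\leq\mathbf{T}$, which the paper isolates as Lemma~\ref{le:powers_Y} and you reprove inline by the same monotonicity-plus-induction argument.
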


	The proof of Theorem~\ref{th:GSPS} relies on the following  two lemmas.
		 \begin{lemma}
	 \label{le:structure_multiplication}
 Let $X_1\in \{0,1\}^{a \times b}$, $X_2 \in \{0,1\}^{b \times a}$ and let $\mathcal{X}_1=\text{Sparse}(X_1)$, $\mathcal{X}_2=\text{Sparse}(X_2)$. Let $Q_1 \in \mathcal{X}_1$ and $Q_2 \in \mathcal{X}_2$. Then $Q_1Q_2 \in \text{Sparse}(X_1X_2)\,.$
	 \end{lemma}
	 	 \begin{proof}
\emph{Suppose  $Q_1Q_2(i,j)\neq 0$. Then $\sum_{k=1}^b Q_1(i,k)Q_2(k,j)\neq 0$ and so there is $\bar{k}$ such that $Q_1(i,\bar{k})\neq 0$ and $Q_2(\bar{k},j) \neq 0$. Since $Q_1 \in \mathcal{X}_1$ and $Q_2 \in \mathcal{X}_2$, then $X_1(i,\bar{k})=1$ and $X_2(\bar{k},j)=1$. Hence, $X_1X_2(i,j)=1$. The same reasoning holds for all indices $i,j$ such that  $Q_1Q_2(i,j)\neq 0$. Hence, $Q_1Q_2 \in \text{Sparse}(X_1X_2)\,.$}
	 \end{proof}
	 	\begin{lemma}
	\label{le:powers_Y}
Let $Y,T$ be binary matrices of compatible dimensions such that $YT\leq T$. Then $Y^iT \leq T$ for every positive integer $i$.
	\end{lemma}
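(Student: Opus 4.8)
The plan is to prove Lemma~\ref{le:powers_Y} by induction on $i$, using Lemma~\ref{le:structure_multiplication} to handle the matrix products that arise. Recall that under the paper's conventions, for binary matrices $Y^iT$ is shorthand for $\text{Struct}(Y^iT)$, and Lemma~\ref{le:structure_multiplication} (applied with $0$--$1$ matrices viewed as sparse subspaces, or equivalently noting that $\text{Struct}(XY)=\text{Struct}(\text{Struct}(X)\text{Struct}(Y))$) gives monotonicity of structured multiplication: if $X_1 \leq X_1'$ and $X_2 \leq X_2'$ then $X_1 X_2 \leq X_1' X_2'$. This monotonicity is the only tool needed.

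First I would establish the base case: for $i=1$, the claim $Y^1 T = YT \leq T$ is exactly the hypothesis. For the inductive step, assume $Y^i T \leq T$ for some positive integer $i$. Then I would write $Y^{i+1}T = Y(Y^i T)$ (as a structured product; one should check this associativity of $\text{Struct}$ under multiplication, which follows from Lemma~\ref{le:structure_multiplication} applied in both directions). Applying monotonicity of the product to the relation $Y^i T \leq T$ from the left by $Y$ yields $Y(Y^i T) \leq YT$. Combining with the hypothesis $YT \leq T$ and transitivity of $\leq$ gives $Y^{i+1}T \leq T$, completing the induction.

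The only point requiring a little care — and the one I expect to be the mild obstacle — is justifying that the structured power $Y^{i+1}T$ equals the structured product $Y \cdot (Y^i T)$, i.e., that inserting $\text{Struct}(\cdot)$ at an intermediate stage does not change the final sparsity pattern. This is precisely the content that Lemma~\ref{le:structure_multiplication} is designed to give: since $\text{Struct}(AB)$ depends only on $\text{Struct}(A)$ and $\text{Struct}(B)$, one has $\text{Struct}(Y^{i+1}T) = \text{Struct}(Y \cdot Y^i T) = \text{Struct}(\text{Struct}(Y)\cdot\text{Struct}(Y^i T))$, so the bookkeeping is consistent with the paper's convention $Y^r := \text{Struct}(Y^r)$. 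Everything else is a routine two-line induction.
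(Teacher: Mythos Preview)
Your proposal is correct and follows essentially the same route as the paper: both argue by induction, with the inductive step reducing to the monotonicity fact that $X\leq T$ implies $YX\leq YT$ (the paper verifies this entrywise inside the proof, while you extract it from Lemma~\ref{le:structure_multiplication}), and then chain with $YT\leq T$. Your careful remark about $\text{Struct}$-associativity is a point the paper leaves implicit; otherwise the arguments coincide.
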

\begin{proof}
\emph{We prove that  if $YT\leq T$ and  $X\leq T$, then $YX \leq T$.  For any $i,j$ such that $YX(i,j)=1$,  there exists $k$ such that $Y(i,k)=X(k,j)=1$. Then, $T(k,j)=1$ because $X\leq T$ and $YT(i,j)=1$ because $Y(i,k)=T(k,j)=1$. This implies that $YX\leq YT$. We know by hypothesis that $YT\leq T$, hence $YX\leq YT \leq T$. Next, we prove the statement by induction. The base case $YT\leq T$ holds by hypothesis. Suppose that $Y^{i-1}T \leq T$. Then $Y^iT=Y(Y^{i-1}T) \leq T$ follows by $Y^{i-1}T \leq T$ and the observations above.}  
	\end{proof}
	We are now ready to prove Theorem~\ref{th:GSPS}.	
	
	\begin{proof}(Proof of Theorem~\ref{th:GSPS})
 \emph{Let $\mathbf{T}\leq \mathbf{S}$ and $\mathbf{YT}\leq \mathbf{T}$, and take a generic $\mathbf{Q} \in \mathbfcal{R}_{\mathbf{CB}}(\mathbf{T,Y})$. By definition, $\mathbf{Q} \in \text{Sparse}(\mathbf{T})$ and $\mathbf{QCB} \in \text{Sparse}(\mathbf{Y})$. Consider the corresponding $\mathbf{L}=\sum_{i=0}^{N-1}(\mathbf{QCB})^{i}\mathbf{Q}$. The addend corresponding to $i=0$ is $\mathbf{Q}$ and clearly lies in $\mathbfcal{S}$, because $\mathbf{T}\leq \mathbf{S}$ and thus $\mathbf{Q} \in \text{Sparse}(\mathbf{T}) \subseteq \mathbfcal{S}$. By Lemma~\ref{le:powers_Y}, $\mathbf{Y}^i\mathbf{T}\leq \mathbf{T}$ for every positive integer $i$ and thus $\text{Sparse}(\mathbf{Y}^i\mathbf{T})\subseteq \text{Sparse}(\mathbf{T})$.  Lemma~\ref{le:structure_multiplication} states that $(\mathbf{QCB})^i \mathbf{Q} \in \text{Sparse}(\mathbf{Y}^i\mathbf{T})$ for any positive integer $i$. Since $\text{Sparse}(\mathbf{Y}^i\mathbf{T})\subseteq \text{Sparse}(\mathbf{T})$, then $(\mathbf{QCB})^i \mathbf{Q} \in \text{Sparse}(\mathbf{T})$. Hence, every addend in the power series expansion of $\mathbf{L}$ lies in $\text{Sparse}(\mathbf{T}) \subseteq \mathbfcal{S}$ and we conclude $\mathbf{L} \in \mathbfcal{S}$. This concludes the proof because $\mathbf{Q}$ was a generic element of $\mathbfcal{R}_{\mathbf{CB}}(\mathbf{T,Y})$.}
	\end{proof}



We remark that the conditions of Theorem~\ref{th:GSPS} are sufficient to satisfy (\ref{eq:generalized_set_requirement}), but not necessary. This is because $\mathbf{L}$ as in (\ref{eq:power_expansion}) can lie in $\mathbfcal{S}$ even if some of the addends do not lie in $\mathbfcal{S}$.

\subsection{Lowering the upper bounds}
\label{sub:lowering}

To obtain the least upper bound on the minimum cost of Problem~1 we need to maximize the set $h(\mathbfcal{R}_{\mathbf{CB}}(\mathbf{T,Y}),\mathbf{CB})$ with the constraint that it  remains a subset of $\mathbfcal{S}$. Visually, this is equivalent to enlarging the (possibly non-convex) set on the right side of Figure~\ref{fig:mapping} so that it still fits inside $\mathbfcal{S}$. 

Suppose that  $\mathbf{T}\leq \mathbf{S}$ is fixed. Using Theorem~\ref{th:GSPS}, the maximization problem described above is equivalent to designing $\mathbf{Y}_{\text{max}}^{\mathbf{T}}$ such that $\mathbf{Y}_{\text{max}}^{\mathbf{T}}\mathbf{T}\leq \mathbf{T}$ and for any other $\mathbf{Y}$ with $\mathbf{YT} \leq \mathbf{T}$, the following condition holds
\begin{equation}
\label{eq:optimal_set_requirement}
 h\left(\mathbfcal{R}_{\mathbf{CB}}(\mathbf{T,Y}),\mathbf{CB}\right) \subseteq h\left(\mathbfcal{R}_{\mathbf{CB}}(\mathbf{T,\mathbf{Y}}_{\text{max}}^{\mathbf{T}}),\mathbf{CB}\right)\,.
\end{equation}
The set inclusion (\ref{eq:optimal_set_requirement}) and the above requirements on $\mathbf{Y}_{\text{max}}^{\mathbf{T}}$ can be restated as follows
\begin{equation}
\label{eq:requirement_binary}
1)\mathbf{Y}_{\text{max}}^{\mathbf{T}}\mathbf{T}\leq \mathbf{T},~~\text{ and }~~ 2)\mathbf{YT} \leq \mathbf{T} \implies \mathbf{Y} \leq \mathbf{Y}_{\text{max}}^{\mathbf{T}} \,.
\end{equation}
This is because for any  function $f:\mathcal{D}\rightarrow \mathcal{C}$ and sets $\mathcal{X}_1 \subseteq \mathcal{X}_2 \subseteq \mathcal{D}$  we have that $f(\mathcal{X}_1)\subseteq f(\mathcal{X}_2)$. Hence, the set inclusion (\ref{eq:optimal_set_requirement})   is implied by $\mathbfcal{R}_{\mathbf{CB}}(\mathbf{T,Y})\subseteq \mathbfcal{R}_{\mathbf{CB}}(\mathbf{T,\mathbf{Y}}_{\text{max}}^{\mathbf{T}})$. 

Using Definition~\ref{de:GSS} of a GSS, observe that condition (\ref{eq:requirement_binary}) amounts to requiring that the number of entries of the term $\mathbf{QCB}$ which are set to zero is kept to a minimum, while still ensuring that the sparsity of $\mathbf{L}$ is preserved. Matrix $\mathbf{Y}_{\text{max}}^{\mathbf{T}}$ as in (\ref{eq:requirement_binary}) is computed with a simple procedure described in the following proposition.
 \begin{proposition}
 \label{pr:cannot_be_larger}
Fix $\mathbf{T}\leq \mathbf{S}$.  Let $\mathbf{Y}_{\emph{\text{max}}}^{\mathbf{T}}(i,j)=0$ if  $k$ exists such that $\mathbf{T}(i,k)=0$ and $\mathbf{T}(j,k)=1$, and $\mathbf{Y}_{\emph{\text{max}}}^{\mathbf{T}}(i,j)=1$ otherwise. Then $\mathbf{Y}_{\emph{\text{max}}}^{\mathbf{T}}$ is maximal as per  (\ref{eq:requirement_binary}). 
 \end{proposition}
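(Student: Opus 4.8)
The plan is to verify the two requirements in condition~(\ref{eq:requirement_binary}) directly from the explicit construction of $\mathbf{Y}_{\text{max}}^{\mathbf{T}}$. Recall the defining rule: $\mathbf{Y}_{\text{max}}^{\mathbf{T}}(i,j)=0$ precisely when there exists a column index $k$ with $\mathbf{T}(i,k)=0$ and $\mathbf{T}(j,k)=1$; otherwise $\mathbf{Y}_{\text{max}}^{\mathbf{T}}(i,j)=1$. Equivalently, $\mathbf{Y}_{\text{max}}^{\mathbf{T}}(i,j)=1$ iff row $j$ of $\mathbf{T}$ is dominated (entrywise) by row $i$ of $\mathbf{T}$, i.e. the support of row $j$ is contained in the support of row $i$. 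I would state this reformulation first, since both parts of the proof become transparent through it.

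First I would prove $\mathbf{Y}_{\text{max}}^{\mathbf{T}}\mathbf{T}\leq \mathbf{T}$. Suppose $(\mathbf{Y}_{\text{max}}^{\mathbf{T}}\mathbf{T})(i,j)=1$. Then there is some $k$ with $\mathbf{Y}_{\text{max}}^{\mathbf{T}}(i,k)=1$ and $\mathbf{T}(k,j)=1$. By the reformulation, $\mathbf{Y}_{\text{max}}^{\mathbf{T}}(i,k)=1$ means the support of row $k$ of $\mathbf{T}$ is contained in the support of row $i$ of $\mathbf{T}$; since $\mathbf{T}(k,j)=1$, column $j$ is in the support of row $k$, hence also in the support of row $i$, i.e. $\mathbf{T}(i,j)=1$. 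This holds for every such $(i,j)$, so $\mathbf{Y}_{\text{max}}^{\mathbf{T}}\mathbf{T}\leq \mathbf{T}$.

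Next I would prove the implication: if $\mathbf{Y}\mathbf{T}\leq \mathbf{T}$ then $\mathbf{Y}\leq \mathbf{Y}_{\text{max}}^{\mathbf{T}}$. Argue by contraposition on a single entry: suppose $\mathbf{Y}(i,j)=1$ but $\mathbf{Y}_{\text{max}}^{\mathbf{T}}(i,j)=0$. By the construction of $\mathbf{Y}_{\text{max}}^{\mathbf{T}}$, the latter gives a column $k$ with $\mathbf{T}(j,k)=1$ and $\mathbf{T}(i,k)=0$. Then $(\mathbf{Y}\mathbf{T})(i,k)\geq \mathbf{Y}(i,j)\mathbf{T}(j,k)=1$, so $(\mathbf{Y}\mathbf{T})(i,k)=1>0=\mathbf{T}(i,k)$, contradicting $\mathbf{Y}\mathbf{T}\leq \mathbf{T}$. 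Hence no such entry exists and $\mathbf{Y}\leq \mathbf{Y}_{\text{max}}^{\mathbf{T}}$. Combining the two parts yields~(\ref{eq:requirement_binary}).

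There is no serious obstacle here; the only thing to be careful about is the Boolean arithmetic convention for the matrix product (recall $\mathbf{Y}\mathbf{T}:=\text{Struct}(\mathbf{Y}\mathbf{T})$), so that $(\mathbf{Y}\mathbf{T})(i,k)=1$ exactly when some intermediate index $j$ witnesses $\mathbf{Y}(i,j)=\mathbf{T}(j,k)=1$, and all inequalities are entrywise on $\{0,1\}$. I would also remark — optionally, since it is not required by the statement — that $\mathbf{Y}_{\text{max}}^{\mathbf{T}}$ is the unique maximal such matrix, which follows immediately from~(\ref{eq:requirement_binary}): any two matrices satisfying it dominate each other.
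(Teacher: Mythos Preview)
Your proof is correct and follows essentially the same route as the paper's own proof in Appendix~\ref{app:cannot_be_larger}: both verify $\mathbf{Y}_{\text{max}}^{\mathbf{T}}\mathbf{T}\leq \mathbf{T}$ by a direct entrywise argument and establish the maximality implication by contraposition on a single entry. The only cosmetic difference is that for the first part the paper starts from $\mathbf{T}(i,k)=0$ and shows $(\mathbf{Y}_{\text{max}}^{\mathbf{T}}\mathbf{T})(i,k)=0$, whereas you start from $(\mathbf{Y}_{\text{max}}^{\mathbf{T}}\mathbf{T})(i,j)=1$ and show $\mathbf{T}(i,j)=1$; your explicit row-support reformulation of $\mathbf{Y}_{\text{max}}^{\mathbf{T}}(i,j)=1$ is a nice touch but does not change the argument.
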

 \begin{proof}
 \emph{ Take indices $i,k$ such that  $\mathbf{T}(i,k)=0$. Then, $\mathbf{Y}_{\text{max}}^{\mathbf{T}}(i,j)=0$ for every $j\neq i$ such that $\mathbf{T}(j,k)=1$ by construction. This implies that $\mathbf{Y}_{\text{max}}^{\mathbf{T}}\mathbf{T}(i,k)={\mathlarger{\mathlarger{\mathlarger{\mathlarger{\lor}}}}}_{j=1}^{m(N+1)}\mathbf{Y}_{\text{max}}^{\mathbf{T}}(i,j)\mathbf{T}(j,k)=0$, where  ``$\lor$'' indicates the logical ``or'' operator.  Hence,  $\mathbf{Y}_{\text{max}}^{\mathbf{T}}\mathbf{T}\leq \mathbf{T}$. For the second of (\ref{eq:requirement_binary}) we reason by contrapositive. Suppose that $\mathbf{Y} \not \leq \mathbf{Y}_{\text{max}}^{\mathbf{T}}$. Then, there are indices $i,j$ such that $\mathbf{Y}(i,j)=1$ and $\mathbf{Y}_{\text{max}}^{\mathbf{T}}(i,j)=0$. By construction  $\mathbf{Y}_{\text{max}}^{\mathbf{T}}(i,j)=0$ implies that there exists index $k$ such that $\mathbf{T}(j,k)=1$ and $\mathbf{T}(i,k)=0$. As a consequence, $\mathbf{YT}(i,k)={\mathlarger{\mathlarger{\mathlarger{\mathlarger{\lor}}}}}_{r=1}^{m(N+1)}\mathbf{Y}(i,r)\mathbf{T}(r,k)=\mathbf{Y}(i,j)\mathbf{T}(j,k)=1$, where  ``$\lor$'' indicates the logical ``or'' operator. This implies that $\mathbf{YT}\not \leq \mathbf{T}$ because $\mathbf{T}(i,k)=0$.}
 \end{proof}

  For the rest of the paper $\mathbf{Y}_{\text{max}}^{\mathbf{T}}$  will always refer to $\mathbf{Y}$ being designed according to Proposition~\ref{pr:cannot_be_larger}. Summing up, the choice $\mathbf{Y}=\mathbf{Y}_{\text{max}}^{\mathbf{T}}$ ensures that $\mathbfcal{R}_{\mathbf{CB}}(\mathbf{T},\mathbf{Y}_{\text{max}}^{\mathbf{T}})$ is sparsity preserving  and maximal  as per (\ref{eq:optimal_set_requirement}). Hence, lowered upper bounds on the minimum cost of Problem~1  can be found by solving Problem~2 with $\mathbfcal{R}_{\mathbf{CB}}(\mathbf{T,Y_{\text{max}}^{\mathbf{T}}})$. Furthermore, notice that the design procedure in Proposition~\ref{pr:cannot_be_larger}  has a complexity of $O(ab^2)$, where $a$ and $b$ are the number of rows and columns of $\mathbf{T}$ respectively, and $\mathbf{Y}_{\text{max}}^\mathbf{T}$ can thus be computed in polynomial time.

At this point, you may wonder why we did not restrict our attention to the case where $\mathbf{T}$ is maximal, that is $\mathbf{T}=\mathbf{S}$. The reason is that the definition of the GSS  $\mathbfcal{R}_{\mathbf{CB}}(\cdot)$ depends on the system $\mathbf{CB}$. Because of this fact it is possible that $\mathbfcal{R}_{\mathbf{CB}}(\mathbf{S},\mathbf{Y}_{\text{max}}^{\mathbf{S}})$ is a strict subset of $\mathbfcal{R}_{\mathbf{CB}}(\mathbf{T},\mathbf{Y}_{\text{max}}^{\mathbf{T}})$ for some non-maximal $\mathbf{T}<\mathbf{S}$. 

\section{Upper Bounds Beyond Quadratic Invariance}
\label{se:upperbounds}

We start by recalling the notion of Quadratic Invariance.
\begin{definition}[Quadratic Invariance (QI)]
\label{de:QI}
\emph{Let  ${\mathcal{K} \subseteq \mathbb{R}^{a \times b}}$ be a subspace and $G\in \mathbb{R}^{b\times a}$.  Then $\mathcal{K}$ is QI with respect to $G$ if $KGK \in \mathcal{K}$ for every $K \in \mathcal{K}$.}
\end{definition}
As outlined,  our approach relies on some technical restrictions to guarantee that $\mathbf{L} \in \mathbfcal{S}$. You may then wonder how the suboptimality intrinsic to such restrictions compares with the previously proposed approach of finding  a QI sparsity subspace which is a close subset of $\mathbfcal{S}$ \cite{rotkowitz2012nearest}.
  \begin{proposition}
Let $\text{Sparse}(\mathbf{T}_{\text{QI}})\subseteq \mathbfcal{S}$ be QI with respect to $\mathbf{CB}$.   Then, there exists $\mathbf{T}\leq \mathbf{S}$ such that $J^\star \leq J_{\text{QI}}$, where $J^\star$ and $J_{\text{QI}}$ are the minimum costs of Problem~2 with  $\mathbfcal{R}=\mathbfcal{R}_{\mathbf{CB}}(\mathbf{T},\mathbf{Y}_{\text{max}}^{\mathbf{T}})$ and $\mathbfcal{R}=\text{Sparse}(\mathbf{T}_{\text{QI}})$ respectively.
  \end{proposition}
  \begin{proof}
   \emph{ We prove that since $\text{Sparse}(\mathbf{T}_{\text{QI}})$ is QI with respect to $\mathbf{CB}$ then $\mathbfcal{R}_{\mathbf{CB}}(\mathbf{T_{\text{QI}},T_{\text{QI}}\Delta})=\text{Sparse}(\mathbf{T}_{\text{QI}})$ where $\mathbf{\Delta}=\text{Struct}(\mathbf{CB})$. This is because by \cite[Theorem 26]{rotkowitz2006characterization} QI is equivalent to $\mathbf{T}_{\text{QI}}\mathbf{\Delta}\mathbf{T}_{\text{QI}} \leq \mathbf{T}_{\text{QI}}$. Then, by Proposition~\ref{pr:cannot_be_larger} and (\ref{eq:requirement_binary}) we obtain that $ \mathbf{T}_{\text{QI}}\mathbf{\Delta}\leq \mathbf{Y}_{\text{max}}^{\mathbf{T}_{\text{QI}}}$. Also observe that if $\mathbf{Q}\in \text{Sparse}(\mathbf{T}_{\text{QI}})$ then $\mathbf{QCB} \in  \text{Sparse}(\mathbf{Y}_{\text{max}}^{\mathbf{T}_{\text{QI}}})$ by the above and Lemma~\ref{le:structure_multiplication}. Then $\mathbfcal{R}_{\mathbf{CB}}(\mathbf{T},\mathbf{Y}_{\text{max}}^{\mathbf{T}})=\text{Sparse}(\mathbf{T}_{\text{QI}})$ by selecting $\mathbf{T}=\mathbf{T}_{\text{QI}}$. Hence, $J^\star = J_{\text{QI}}$ and $J^\star \leq J_{\text{QI}}$.}
  \end{proof}
The proposition above shows that the performance achieved by requiring that $\mathbf{Q}$ lies in any QI sparsity subspace  can always be matched by requiring that $\mathbf{Q}$ lies in a corresponding GSS. Our approach thus recovers the global optimum of Problem~1 if $\mathbfcal{S}$ is QI \cite{rotkowitz2006characterization} by solving Problem~2 with the GSS $\mathbfcal{R}_{\mathbf{CB}}(\mathbf{S},\mathbf{Y}_{\text{max}}^{\mathbf{S}})$. Furthermore, when $\mathbfcal{S}$ is not QI, we show with an application example that we can in general  select a GSS which \emph{outperforms} QI sparsity subsets of $\mathbfcal{S}$ as per \cite{rotkowitz2012nearest}. 
\vspace{-0.2cm}
\section{Application to Platooning}
\label{se:application}
The problem of platooning considers control of a set of vehicles moving on a straight line in presence of disturbances. The difficulty in this control problem lies in the fact that each vehicle has local information, based on measurements of the formation leader and/or the preceding vehicle only.  A review of the most prominent challenges and past solutions can be found in \cite{sabuau2017optimal,zheng2017distributed,sadraddini2017provably}. The problem of distributed set invariance applied to platooning was considered in \cite{sadraddini2018distributed}. Our goal here is to show applicability and efficacy of our approach to controller design when a \emph{predecessor-follower} \cite{sabuau2017optimal} non-QI information structure must be complied with.

We consider the platooning of $n$ vehicles modeled as point masses of mass $m$. For each vehicle, the engine thrust is modeled as a force control input acting on the point mass. The system in continuous time is expressed by equations
		\begin{equation*}
		\begin{aligned}
		&\dot{x}(t)=A_cx(t)+B_cu(t)+w(t)\,,\quad y(t)=Cx(t)+w(t)\,,
		\end{aligned}
		\end{equation*}
	where $x(t) \in \mathbb{R}^{2n}$ (the first $n$ state variables are the positions, the last $n$ are the velocities), $u(t) \in \mathbb{R}^n$ and $w(t) \in \mathcal{W}$ for every $t \in \mathbb{R}$,  $\mathcal{W} \subset \mathbb{R}^{2}$ being a polytope of disturbances for positions and velocities. Matrices $A_c$ and $B_c$ are written as
	\begin{equation*}
	\begin{aligned}
	&A_c=\begin{bmatrix}0_{n\times n} & I_n\\ 0_{n \times n}&0_{n \times n}\end{bmatrix},~\quad B_c=\begin{bmatrix}0_{n\times n}\\\frac{1}{m}I_n\end{bmatrix}\,.
	\end{aligned}
	\end{equation*}
We assume that the leading vehicle in the formation knows its own absolute position and velocity, while all the other vehicles can measure their relative distance to the preceding vehicle and their own absolute velocity. Please see Figure~\ref{fig:porsche}~\footnote{Image from \url{http://carsinamerica.net/porsche-918}}. Matrix $C$  defines the outputs of the system according to this information structure.
	\begin{figure}[thb]
	      \centering
	      	\includegraphics[width=0.5\textwidth]{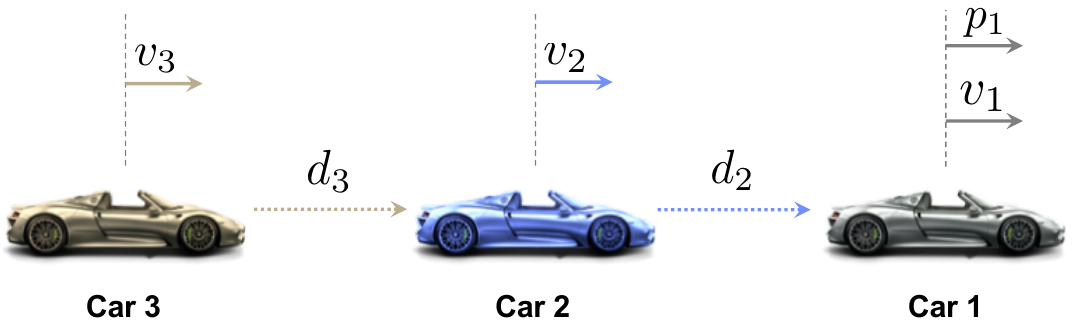}
	      	\setlength{\belowcaptionskip}{-10pt}
	      	\caption{}
	      		      	      		\label{fig:porsche}

	\end{figure}
A first order approximation of the continuous time model with sampling time $T_s=0.2\si{\second}$ (Euler discretization) is obtained as
\begin{equation*}
\begin{aligned}
&x_{k+1}=Ax_k+Bu_k+w_k\,, \quad y_k=Cx_k+w_k\,,\\
\end{aligned}
\end{equation*} 	
where $x_k \in \mathbb{R}^{2n}$, $u_k \in \mathbb{R}^n$, $y_k \in \mathbb{R}^{2n}$, $w_k \in \mathcal{W}$ for every $k$, $A=I_{2n}+A_cT_s$, $B=\left(I_{2n}+\frac{A_cT_s}{2}\right)B_cT_s$. Every vehicle can measure its own two outputs as per Figure~\ref{fig:porsche}. Hence,  the sparsity subspace is defined as $\mathbfcal{S}=\text{Sparse}(\mathbf{S})$, where each non-zero block of $\mathbf{S}$ according to (\ref{eq:L}) is $S=I_n \otimes \begin{bmatrix}1&1\end{bmatrix}$. 
\vspace{-0.1cm}
\subsection{Setting up the simulation}
We set a horizon of $N=15$ time steps. We consider the platooning of $8$ ``Porsche 918 Spyder'' cars  with mass $m=1700 \si{\kilogram}$. These cars have the capability of accelerating from $0\si{\metre \per \second}$ to $27 \si{\metre \per \second}$ in $2.2 \si{\second}$. Hence, we constrain the absolute value of the force input not to exceed $20\si{\kilo \newton}$.

 Disturbances for positions and velocities are taken from the set $\mathcal{W}\subseteq \mathbb{R}^2$, defined as the square  centered at the origin with edges $0.2\si{\metre}$, $\si{\metre \per \second}$ parallel to the axes. Hence, we assume that disturbances up to $0.1\si{\metre}$ on positions and up to $0.1 \si{\metre \per \second}$ on velocities are present at each time because of sliding effects and possible unevenness of the road.  We require that the vehicles maintain a safety distance of $2\si{\metre}$ between each other at every time. 

The cost function $J(\cdot)$ is chosen as a convex quadratic function of the states penalizing both the distance of each vehicle from an assigned target position and their velocities. Without any assumptions on statistics about the disturbances, $J(\cdot)$  is computed over the disturbance-free trajectory of the states. The distances from the assigned target positions are penalized with a weight of $0.01$ from time $0$ to time $N-1$  and with a weight of $0.2$ at time $N$. The velocities are penalized with a weight of $0.002$ from time $0$ to time $N-1$  and with a weight of $0.2$ at time $N$. It can be easily verified that $\mathbfcal{S}$ is not QI with respect to $\mathbf{CB}$ because $SCABS \not \in \text{Sparse}(S)$. Hence, there are no known approaches to obtain an exact convex representation of Problem~1. Let $J^\star$ be the optimal value of Problem~1. In what follows, we apply our techniques to derive upper and lower bounds on  $J^\star$. Furthermore, we compare the performance of our feasible controller with the one obtained using techniques from \cite{rotkowitz2012nearest} and show that strictly better performance can be obtained.
 \subsection{Performance bounds beyond QI sparsity constraints} The work in \cite{rotkowitz2012nearest} considered upper bounds on the minimum cost of Problem~1 based on determining QI sparsity subspaces that are subsets of $\mathbfcal{S}$. Determining  ``nearest'' QI sparsity subspaces which are subsets of $\mathbfcal{S}$ was shown to be intractable \cite{rotkowitz2012nearest}. For the considered example, it can be  verified by direct inspection that there is no binary matrix $T'\leq S$ with cardinality larger or equal than $10$ such that $\text{Sparse}(T')$ is QI with respect to $CA^kB$ for any integer $k \geq 0$.  Let us then consider the binary matrix $T<S$ with cardinality $9$:
 \begin{equation*}
T=\text{blkdiag}\left(I_3 \otimes \begin{bmatrix}
 1&1&0&0\\0&0&0&0
 \end{bmatrix}, \begin{bmatrix}1&1&0&0\\0&0&0&1\end{bmatrix}\right)\,.
 \end{equation*}
 
It can be verified that  $\text{Sparse}(T)$ is QI with respect to $CA^kB$ for any integer $k \geq 0$. Furthermore, $T$ is as close as possible to $S$ in the sense that its cardinality is maximized while still preserving QI. Let us now define the corresponding stacked operators. Let $\mathbf{T}\in \{0,1\}^{m(N+1)\times p(N+1)}$ be such that  its non-zero blocks as per (\ref{eq:L}) are all equal to $T$.   By \cite[Theorem 3]{CDC2017} we have that $\text{Sparse}(\mathbf{T})$ is QI with respect to $\mathbf{CB}$. Since $\text{Sparse}(T)$ is QI with respect to $CA^kB$ for every $k$ and as close as possible to $\text{Sparse}(S)$, then $\text{Sparse}(\mathbf{T})$ is also a QI sparsity subspace which is close  to $\mathbfcal{S}$. By solving Problem~2 with $\mathbfcal{R}=\text{Sparse}(\mathbf{T})$ the upper bound $\bar{J}_{\text{QI}}=131.3381$ on $J^\star$ was obtained.
 
We then solved Problem~2  with the GSS $\mathbfcal{R}= \mathbfcal{R}_{\mathbf{CB}}(\mathbf{S},\mathbf{Y}_{\text{max}}^{\mathbf{S}})$ and obtained the upper bound ${\bar{J}_{\text{GSS}}=123.7376}$ on $J^\star$. The feasible controller we obtained is thus $1-\frac{\bar{J}_{\text{GSS}}}{\bar{J}_{\text{QI}}}=5.8\%$ more performing. This improvement was possible thanks to the fact that sparsity preserving GSS's are significantly more general than QI sparsity subspaces that are subsets of $\mathbfcal{S}$. We also remark that $\mathbfcal{R}_{\mathbf{CB}}(\mathbf{S},\mathbf{Y}_{\text{max}}^{\mathbf{S}})$ can be  easily computed according to the procedure in Proposition~\ref{pr:cannot_be_larger}, whereas an algorithm to determine a QI sparsity subspace which is the nearest subset of $\mathbfcal{S}$ in polynomial time is not known \cite{rotkowitz2012nearest}.

In order to evaluate the precision of the upper bounds on $J^\star$ we computed a lower bound on $J^\star$ based on assuming that additional delayed communication is available as proposed in our work \cite{furieri2017value}.
%
It can be verified that a QI subspace which is a superset of $\mathbfcal{S}$ is obtained if vehicles  propagate the information to the single vehicle following it. Using this information relaxation we obtained the lower bound ${\underline{J}=119.6777}$ on $J^\star$. We can thus state that the feasible control policy computed using our approach is $\frac{\bar{J}_{\text{GSS}}-\underline{J}}{\underline{J}}=3.4\%$ suboptimal in the worst case where $J^\star=\underline{J}$.
  ~ All the instances of Problem~2 that we considered were solved with MOSEK \cite{andersen2000mosek}, called through MATLAB \cite{MATLAB} via YALMIP \cite{YALMIP}, on a computer equipped with a 16GB RAM and a 4.2 GHz quad-core Intel i7 processor. Computing the upper bound $\bar{J}_{\text{GSS}}$ required $2.1874\si{\second}$ of solver time. The  simulation results are shown in Figure \ref{fig:masses}.

	\begin{figure}[thb]
	      \centering
	      	\includegraphics[width=0.6\textwidth]{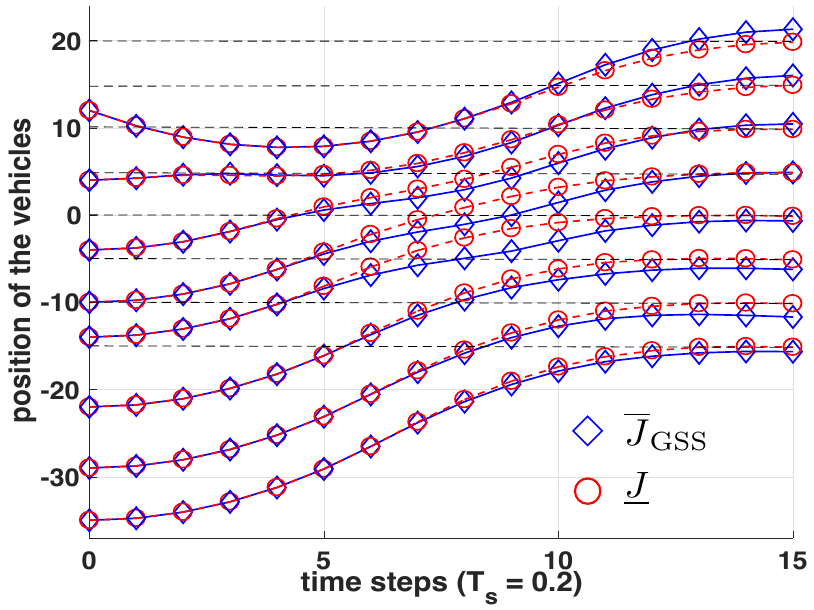}
	      		      	\setlength{\belowcaptionskip}{-13pt}
	      	\caption{ \small Trajectories of the vehicles' positions corresponding to our upper bound $\bar{J}_{\text{GSS}}$ (blue diamonds) and the infeasible lower bound obtained by restoring QI with techniques from \cite{furieri2017value} (red circles). The dashed straight lines represent the target positions. }
	      		\label{fig:masses}
	\end{figure}	
\section{Conclusions}
We proposed a novel technique to compute robust controllers which comply with any information structure. Specifically, the information structure  could be non-QI. 
 Our approach is based on the notion of Generalized Sparsity Subspaces (GSS). We derived conditions so that if the disturbance feedback parameter lies in a GSS the sparsity of the original output feedback controller is preserved and derived a procedure to improve the choice of a GSS for maximized performance guarantees. Furthermore, we showed that the upper bounds we propose are at least as low as the ones obtained with the techniques in \cite{rotkowitz2012nearest} and can be even lower in general. We showed performance improvement with respect to \cite{rotkowitz2012nearest} on a platooning example.
An immediate future development is investigating heuristics to tailor the choice of a GSS to any specific dynamical system.  It is also relevant to determine other classes of sparsity preserving subspaces as per Figure~\ref{fig:mapping} which cannot be expressed as a GSS. 

	\allowdisplaybreaks
	\section*{Appendix: Mathematical Notation}
	\label{app:notation}
	We define the following matrices and vectors. 
		\begin{equation}
	\label{eq:L}
	\begin{aligned}
	&\mathbf{L}=\hspace{-0.1cm}
		\begingroup 
\setlength\arraycolsep{1.4pt}
\begin{bmatrix}
	L_{0,0}&0_{m \times p}&\cdots&0_{m \times p}\\
	\vdots&\vdots&\ddots&\vdots\\
	L_{N\text{-}1,0}&\cdots&L_{N\text{-}1,N\text{-}1}&0_{m \times p}\\
	0_{m \times p}&\cdots&0_{m \times p}&0_{m \times p}
	\end{bmatrix}\hspace{-0.125cm},~\mathbf{g}=\hspace{-0.1cm}\begin{bmatrix}g_0\\\vdots\\g_{N\text{-}1}\\0_{m \times 1}\end{bmatrix}\endgroup\hspace{-0.1cm}.
	\end{aligned}
	\end{equation}
	The matrix blocks above are $L_{k,j}\in \mathbb{R}^{m \times p}$, $g_i \in \mathbb{R}^{m}$ as in (\ref{eq:affine_feedback_constraints}), and the $0_{m \times p}$ blocks enforce causality of the controller.
	\begin{align*}
	&\mathbf{A}=\begin{bmatrix}I_n&A^\mathsf{T}&\cdots&{A^N}^\mathsf{T}\end{bmatrix}^\mathsf{T}\in \mathbb{R}^{n(N+1)\times n}\,,\\
	&\mathbf{E}=\hspace{-0.1cm}
		\begingroup 
\setlength\arraycolsep{3pt}	
	\begin{bmatrix}0_{n \times n}&0_{n \times n}&\cdots&0_{n \times n}&0_{n \times n}\\
	I_n&0_{n \times n}&\cdots&0_{n \times n}&0_{n \times n}\\
	A&I_n&\cdots&0_{n \times n}&0_{n \times n}\\
	\vdots&\vdots&\ddots&\vdots&\vdots\\
	A^{N-1}&A^{N-2}&\cdots&I_n&0_{n \times n}
	\end{bmatrix}\endgroup\hspace{-0.1cm}\in  \mathbb{R}^{n(N\text{+}1)\times n(N\text{+}1)},\\
	&\mathbf{B}=\mathbf{E}(I_{N+1} \otimes B)\,, \quad  	\mathbf{E}_D=\mathbf{E}(I_{N+1} \otimes D)\,,\\
	&\mathbf{C}=I_{N+1} \otimes C\,, \quad \mathbf{H}=I_{N+1} \otimes H \,,\\
	&\mathbf{U}=\begin{bmatrix}
	I_N\otimes U & 0_{Ns \times n}\\0_{r \times nN}&R
	\end{bmatrix}\,, \quad \mathbf{V}=\begin{bmatrix}
	I_{N}\otimes V&0_{Ns \times m} \\ 
	0_{r \times mN}&0_{r \times m}
	\end{bmatrix}\,,\\
	&F=\mathbf{UB+V}\,, \quad G=\mathbf{UE}_D\,,\quad c=\begin{bmatrix} \mathbf{1}_N\otimes b\\z\end{bmatrix}-\mathbf{UA}x_0 \,.\\
	\end{align*}
	\bibliographystyle{IEEEtran}
	\bibliography{IEEEabrv,references}

\begin{thebibliography}{10}
\providecommand{\url}[1]{#1}
\csname url@samestyle\endcsname
\providecommand{\newblock}{\relax}
\providecommand{\bibinfo}[2]{#2}
\providecommand{\BIBentrySTDinterwordspacing}{\spaceskip=0pt\relax}
\providecommand{\BIBentryALTinterwordstretchfactor}{4}
\providecommand{\BIBentryALTinterwordspacing}{\spaceskip=\fontdimen2\font plus
\BIBentryALTinterwordstretchfactor\fontdimen3\font minus
  \fontdimen4\font\relax}
\providecommand{\BIBforeignlanguage}[2]{{%
\expandafter\ifx\csname l@#1\endcsname\relax
\typeout{** WARNING: IEEEtran.bst: No hyphenation pattern has been}%
\typeout{** loaded for the language `#1'. Using the pattern for}%
\typeout{** the default language instead.}%
\else
\language=\csname l@#1\endcsname
\fi
#2}}
\providecommand{\BIBdecl}{\relax}
\BIBdecl

\bibitem{Witsenhausen}
H.~S. Witsenhausen, ``A counterexample in stochastic optimum control,''
  \emph{SIAM Journal on Control}, vol.~6, no.~1, pp. 131--147, 1968.

\bibitem{blondel2000survey}
V.~D. Blondel and J.~N. Tsitsiklis, ``A survey of computational complexity
  results in systems and control,'' \emph{Automatica}, vol.~36, no.~9, pp.
  1249--1274, 2000.

\bibitem{papadimitriou1986intractable}
C.~H. Papadimitriou and J.~Tsitsiklis, ``Intractable problems in control
  theory,'' \emph{SIAM Journal on Control and Optimization}, vol.~24, no.~4,
  pp. 639--654, 1986.

\bibitem{ho1972team}
Y.-C. Ho and C.~K'Ai-Ching, ``Team decision theory and information structures
  in optimal control problems--part i,'' \emph{IEEE Transactions on Automatic
  control}, vol.~17, no.~1, pp. 15--22, 1972.

\bibitem{bamieh2005convex}
B.~Bamieh and P.~G. Voulgaris, ``A convex characterization of distributed
  control problems in spatially invariant systems with communication
  constraints,'' \emph{Systems \& Control Letters}, vol.~54, no.~6, pp.
  575--583, 2005.

\bibitem{rotkowitz2006characterization}
M.~Rotkowitz and S.~Lall, ``A characterization of convex problems in
  decentralized control,'' \emph{IEEE Transactions on Automatic Control},
  vol.~51, no.~2, pp. 274--286, 2006.

\bibitem{CDC2017}
L.~Furieri and M.~Kamgarpour, ``Robust control of constrained systems given an
  information structure,'' in \emph{56th Annual Conference on Decision and
  Control}.\hskip 1em plus 0.5em minus 0.4em\relax IEEE, 2017, pp. 3481--3486.

\bibitem{furieri2017value}
------, ``The value of communication in designing robust distributed
  controllers,'' \emph{arXiv preprint arXiv:1711.05324}, 2017.

\bibitem{fattahi2016theoretical}
S.~Fattahi and J.~Lavaei, ``Theoretical guarantees for the design of near
  globally optimal static distributed controllers,'' in \emph{54th Annual
  Allerton Conference on Communication, Control, and Computing}.\hskip 1em plus
  0.5em minus 0.4em\relax IEEE, 2016, pp. 582--589.

\bibitem{lin2011augmented}
F.~Lin, M.~Fardad, and M.~R. Jovanovic, ``Augmented lagrangian approach to
  design of structured optimal state feedback gains,'' \emph{IEEE Transactions
  on Automatic Control}, vol.~56, no.~12, pp. 2923--2929, 2011.

\bibitem{arastoo2014optimal}
R.~Arastoo, N.~Motee, and M.~V. Kothare, ``Optimal sparse output feedback
  control design: a rank constrained optimization approach,'' \emph{arXiv
  preprint arXiv:1412.8236}, 2014.

\bibitem{rotkowitz2012nearest}
M.~C. Rotkowitz and N.~C. Martins, ``On the nearest quadratically invariant
  information constraint,'' \emph{IEEE Transactions on Automatic Control},
  vol.~57, no.~5, pp. 1314--1319, 2012.

\bibitem{matni2013heuristic}
N.~Matni and J.~C. Doyle, ``A heuristic for sub-optimal h2 decentralized
  control subject to delay in non-quadratically-invariant systems,'' in
  \emph{American Control Conference}.\hskip 1em plus 0.5em minus 0.4em\relax
  IEEE, 2013, pp. 5803--5808.

\bibitem{Goulart}
P.~J. Goulart, E.~C. Kerrigan, and J.~M. Maciejowski, ``Optimization over state
  feedback policies for robust control with constraints,'' \emph{Automatica},
  vol.~42, no.~4, pp. 523--533, 2006.

\bibitem{sabuau2017optimal}
{\c{S}}.~Sab{\u{a}}u, C.~Oar{\u{a}}, S.~Warnick, and A.~Jadbabaie, ``Optimal
  distributed control for platooning via sparse coprime factorizations,''
  \emph{IEEE Transactions on Automatic Control}, vol.~62, no.~1, pp. 305--320,
  2017.

\bibitem{zheng2017distributed}
Y.~Zheng, S.~E. Li, K.~Li, F.~Borrelli, and J.~K. Hedrick, ``Distributed model
  predictive control for heterogeneous vehicle platoons under unidirectional
  topologies,'' \emph{IEEE Transactions on Control Systems Technology},
  vol.~25, no.~3, pp. 899--910, 2017.

\bibitem{sadraddini2017provably}
S.~Sadraddini, S.~Sivaranjani, V.~Gupta, and C.~Belta, ``Provably safe cruise
  control of vehicular platoons,'' \emph{IEEE Control Systems Letters}, vol.~1,
  no.~2, pp. 262--267, 2017.

\bibitem{sadraddini2018distributed}
S.~Sadraddini and C.~Belta, ``Distributed robust set-invariance for
  interconnected linear systems,'' in \emph{American Control Conference}.\hskip
  1em plus 0.5em minus 0.4em\relax IEEE, 2018, pp. 1274--1279.

\bibitem{andersen2000mosek}
E.~D. Andersen and K.~D. Andersen, ``The mosek interior point optimizer for
  linear programming: an implementation of the homogeneous algorithm,'' in
  \emph{High performance optimization}.\hskip 1em plus 0.5em minus 0.4em\relax
  Springer, 2000.

\bibitem{MATLAB}
MATLAB, \emph{version 9.1.0 (R2016b)}.\hskip 1em plus 0.5em minus 0.4em\relax
  The MathWorks Inc., 2016.

\bibitem{YALMIP}
J.~L{\"{o}}fberg, ``Yalmip : A toolbox for modeling and optimization in
  matlab,'' in \emph{In Proceedings of the CACSD Conference}, 2004.

\end{thebibliography}

	\end{document}